\let\accentvec\vec
\documentclass[11pt,a4paper]{llncs}

\let\vec\accentvec
\usepackage{amsmath}
\usepackage{amssymb}

\usepackage{enumerate}
\usepackage{graphicx}
\usepackage{epstopdf}
\usepackage{latexsym}
\usepackage{MnSymbol}
\usepackage{mathrsfs}
\usepackage{courier}
\usepackage[numbers,square,sort&compress]{natbib}

\usepackage{llncsOmat}

\usepackage[a4paper,margin=1in,includefoot]{geometry}

\begin{document}

\thispagestyle{empty}

\title{Synthesizing Minimal Tile Sets for Patterned DNA Self-Assembly}
\author{Mika G\"o\"os \and Pekka Orponen}
\institute{Department of Information and Computer Science \\
Aalto University School of Science and Technology (TKK) \\
P.O. Box 15400, FI-00076 Aalto, Finland\\
\email{\{mika.goos, pekka.orponen\}@tkk.fi}}

\maketitle

\begin{abstract}
The Pattern self-Assembly Tile set Synthesis (PATS) problem is to
determine a set of coloured tiles that self-assemble to implement a
given rectangular colour pattern. We give an exhaustive
branch-and-bound algorithm to find tile sets of minimum cardinality
for the PATS problem. Our algorithm makes use of a search tree in the
lattice of partitions of the ambient rectangular grid, and an efficient
bounding function to prune this search tree. Empirical data on the
performance of the algorithm shows that it compares favourably to
previously presented heuristic solutions to the problem.
\end{abstract}

\section{Introduction}
\label{sec:introduction}

\subsection{Background}

An appealing methodology for bottom-up manufacturing of nanoscale
structures and devices is to use a self-assembling system of DNA
tiles~\citep{Rothemund2006} to build a scaffold structure on which
functional units are deposited~\citep{Lin2006,Park2004,Yan2003}.
A systematic approach to the design of self-assembling DNA scaffold
structures was proposed and experimentally validated by Park et al.\
in~\citep{Park2006}. However, as pointed out by Ma \& Lombardi
in~\citep{Ma2008}, that design is wasteful of tile types, i.e.\
generally speaking the same scaffold structures can be assembled also
from fewer types of specially-manufactured DNA complexes, thus
reducing the requisite laboratory work.

Ma \& Lombardi~\citep{Ma2008} formulated the task of minimizing the
number of DNA tile types required to implement a given 2-D pattern
abstractly as a combinatorial optimization problem,
the \emph{Patterned self-Assembly Tile set Synthesis} (PATS) problem,
and proposed two greedy heuristics for solving it. In this paper, we
present a systematic branch-and-bound approach to exploring the space
of feasible PATS tilings, and assess its computational performance
also experimentally. The method compares favourably to the heuristics
proposed by Ma \& Lombardi, finding noticeably smaller or even
provably minimal tile sets in a reasonable amount of computation time.
However, as the experimental results in Section~\ref{sec:results} show,
the computational problem still remains quite challenging for large patterns.

\subsection{Overview}

Our considerations take place in the \emph{abstract Tile Assembly
  Model} (aTAM) of Winfree and
Rothemund~\citep{Rothemund2001,Rothemund2000,Winfree1998}. The key
features of this model are as follows. The basic building blocks to
construct different two-dimensional shapes are called \emph{tiles}. A
tile is a non-rotating unit square that has different kinds of
\emph{glues} of varying strengths associated with each of its four
edges. Only finitely many different combinations of glues are allowed,
that is, we consider only finitely many \emph{glue types}; we have an
unlimited supply of tiles, however. In the aTAM, tile assemblies are
constructed through the process of self-assembly. Self-assembly begins
with a \emph{seed assembly}, an initial tile assembly, already in
place in the discrete grid $\Z\times\Z$. A new tile can extend an
existing assembly by binding itself into a suitable position in the
grid: we require a new tile to interact with existing tiles with
sufficient strength to overcome a certain universal \emph{temperature}
threshold. A more detailed account of the aTAM model is given in
section \ref{sec:atam}.

In the PATS problem~\citep{Ma2008}, one associates a \emph{colour}
with each tile type and targets a specific coloured \emph{pattern}
within a rectangular assembly. The question is: given the desired
colour pattern, what is the smallest set of (coloured) tile types that
will self-assemble to implement it? The specifics of the PATS problem are
given in section \ref{sec:pats}.

Our definition of the PATS problem restricts the self-assembly process
to proceed in a uniform way. This simplification allows us to design
efficient strategies for an exhaustive search. In
section \ref{sec:algorithm} we give full particulars of a novel
branch-and-bound (B\&B) algorithm for the PATS problem. For a pattern
of size $m\times n$, we reduce the problem of finding a minimal tile
set to the problem of finding a minimum-size \emph{constructible}
partition of $[m]\times[n]$. Here, constructibility of a partition can
be verified in time polynomial in $m$ and $n$. This leads us to
construct a search tree in the lattice of partitions of the set
$[m]\times[n]$ and to find pruning strategies for this search tree.
In the concluding sections \ref{sec:results} and \ref{sec:conclusion}
we give some performance data on the B\&B algorithm and summarize
our contributions.

\section{Preliminaries}
\label{sec:preliminaries}

\subsection{The abstract tile assembly model}
\label{sec:atam}

Our notation is derived from those of \citep{Adleman2002,Lathrop2009,Rothemund2000}. First, to simplify our notations, let $\mathcal{D} = \{N,E,S,W\}$ be the set of four functions $\Z^2\to\Z^2$ corresponding to the cardinal directions (north, east, south, west) so that $N(x,y)=(x,y+1)$, $E(x,y)=(x+1,y)$, $S=N^{-1}$ and $W=E^{-1}$.

Let $\Sigma$ be a set of \emph{glue types} and $s:\Sigma\times\Sigma \to \mathbb{N}$ a \emph{glue strength} function such that $s(\sigma_1,\sigma_2)=s(\sigma_2,\sigma_1)$ for all $\sigma_1,\sigma_2\in\Sigma$. In this paper, we only consider glue strength functions for which $s(\sigma_1,\sigma_2)=0$ if $\sigma_1\neq\sigma_2$. A \emph{tile type} $t\in\Sigma^4$ is a quadruple $(\sigma_N(t), \sigma_E(t), \sigma_S(t), \sigma_W(t))$ of glue types for each side of a unit square. Given a set $\Sigma$ of glues, an \emph{assembly} $\mathcal{A}$ is a partial mapping from $\Z^2$ to $\Sigma^4$. A \emph{tile assembly system} (TAS) $\mathscr{T}=(T,\mathcal{S},s,\tau)$ consists of a finite set $T$ of tile types, an assembly $\mathcal{S}$ called the \emph{seed assembly}, a glue strength function $s$ and a \emph{temperature} $\tau \in \Z^+$ (we use $\tau = 2$).

To formalize the self-assembly process, we first fix a TAS $\mathscr{T}=(T,\mathcal{S},s,\tau)$. For two assemblies $\mathcal{A}$ and $\mathcal{A}'$ we write $\mathcal{A} \rightarrow_{\mathscr{T}} \mathcal{A}'$ if there exists a pair $(x,y)\in\Z^2$ and a tile $t\in T$ such that $\mathcal{A}' = \mathcal{A} \cup \{((x,y),t)\}$, where the union is disjoint, and
\begin{equation}
\sum_D s(\sigma_D(t),\sigma_{D^{-1}}(\mathcal{A}(D(x,y)))\enspace\geq\enspace\tau\enspace,
\end{equation}
where $D$ ranges over those directions in $\mathcal{D}$ for which $\mathcal{A}(D(x,y))$ is defined. This is to say that a new tile can be adjoined to an assembly $\mathcal{A}$ if the new tile shares a common boundary with tiles that bind it into place with total strength at least $\tau$.

Let $\rightarrow^*_{\mathscr{T}}$ be the reflexive transitive closure of $\rightarrow_{\mathscr{T}}$.  A TAS $\mathscr{T}$ \emph{produces} an assembly $\mathcal{A}$ if $\mathcal{A}$ is an \emph{extension} of the seed assembly $\mathcal{S}$, that is if $\mathcal{S}\rightarrow^*_{\mathscr{T}}\mathcal{A}$. Let us denote by $\Prod \mathscr{T}$ the set of all assemblies produced by $\mathscr{T}$. This way, the pair $(\Prod \mathscr{T}, \rightarrow^*_{\mathscr{T}})$ forms a partially ordered set. We say that a TAS $\mathscr{T}$ is \emph{deterministic} if for any assembly $\mathcal{A} \in \Prod \mathscr{T}$ and for every $(x,y)\in\Z^2$ there exists at most one $t\in T$ such that $\mathcal{A}$ can be extended with $t$ at position $(x,y)$. A TAS $\mathscr{T}$ is deterministic precisely when $\Prod \mathscr{T}$ is a lattice. Also, the maximal elements in $\Prod \mathscr{T}$ are such assemblies $\mathcal{A}$ that can not be further extended, that is, there do not exist assemblies $\mathcal{A}'$ such that $\mathcal{A}\rightarrow_{\mathscr{T}}\mathcal{A}'$. These maximal elements are called \emph{terminal assemblies}. We denote by $\Term \mathscr{T}$ the set of terminal assemblies of $\mathscr{T}$. If all \emph{assembly sequences}
\begin{equation}
\mathcal{S} \rightarrow_{\mathscr{T}} \mathcal{A}_1 \rightarrow_{\mathscr{T}} \mathcal{A}_2 \rightarrow_{\mathscr{T}} \cdots
\end{equation}
terminate and $\Term \mathscr{T} = \{\mathcal{P}\}$ for some assembly $\mathcal{P}$, we say that $\mathscr{T}$ \emph{uniquely produces} $\mathcal{P}$.

\subsection{The PATS problem}
\label{sec:pats}

In this paper we restrict our attention to designing minimal tile assembly systems that construct a given pattern in a finite rectangular $m$ by $n$ grid $[m]\times[n]\subseteq\Z^2$. This problem was first discussed by Ma \& Lombardi \citep{Ma2008}.

A mapping from $[m]\times[n]$ onto $[k]$ defines a \emph{$k$-colouring} or a \emph{$k$-coloured pattern}. To build a given pattern, we start with boundary tiles in place for the west and south borders of the $m$ by $n$ rectangle and keep extending this assembly by tiles with strength-1 glues.

\begin{definition}[Pattern self-Assembly Tile set Synthesis (PATS) \normalfont\citep{Ma2008}\bfseries]
\vspace{.3cm}

\begin{tabular}[t]{r@{\hspace{.3cm}}p{0.8\textwidth}}
\textbf{Given:} & A $k$-colouring $c:[m]\times[n]\to[k]$. \\
\textbf{Find:}  & A tile assembly system $\mathscr{T} = (T,\mathcal{S},s,2)$ such that
\begin{enumerate}[P1. ]
		\item The tiles in $T$ have bonding strength 1.
		\item The domain of $\mathcal{S}$ is $[0,m]\times\{0\}\cup\{0\}\times[0,n]$ and all the terminal assemblies have the domain $[0,m]\times[0,n]$.
		\item There exists a colouring $d:T\to[k]$ such that for each terminal assembly $\mathcal{A}\in\Term\mathscr{T}$ we have $d(\mathcal{A}(x,y))=c(x,y)$ for all $(x,y)\in[m]\times[n]$.
	\end{enumerate} \\
\end{tabular}
\end{definition}

In particular, we are interested in the \emph{minimal solutions} (in terms of $|T|$) to the PATS problem. By the same token, we can make the following assumption:
\begin{assumption}
In our TASs, every tile participates in assembling some terminal assembly.
\end{assumption}

Ma \& Lombardi show a certain derivative of the above optimization problem \textsf{NP}-hard in \citep{Ma2009}. However, to our knowledge, a proof of the \textsf{NP}-hardness of the PATS problem as stated above is lacking.

As an illustration, we construct a part of the Sierpinski triangle with a 4-tile TAS in Figure \ref{fig:sierpinski}. We use natural numbers as glue labels in our figures.

\begin{figure}[t]
  \centering
  \includegraphics[width=\textwidth]{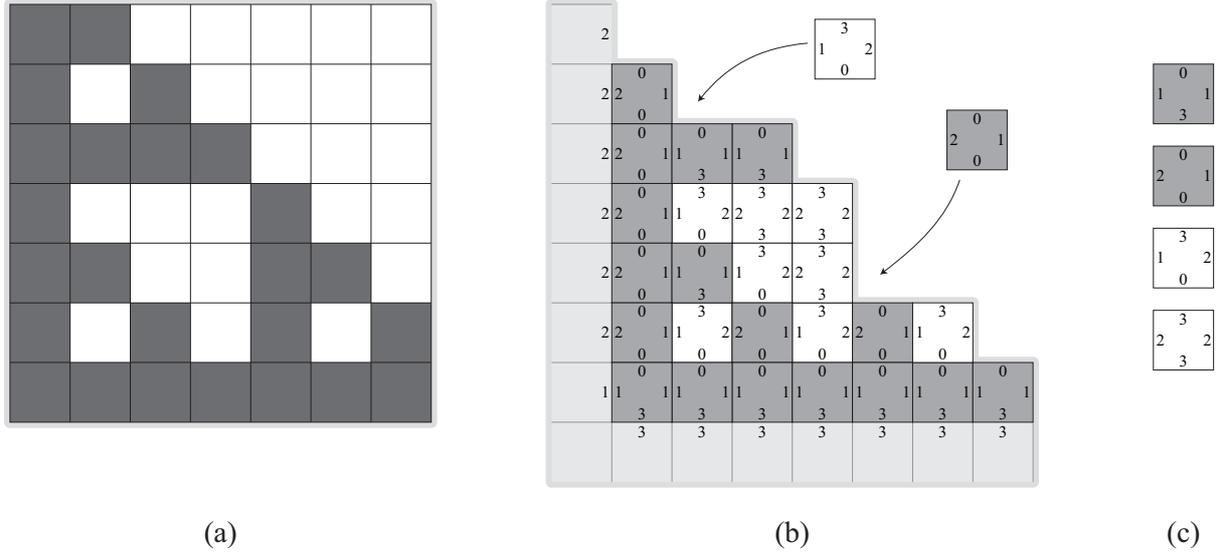}
  \caption{(a) A finite subset of the discrete Sierpinski triangle. This 2-colouring of the set $[7]\times[7]$ defines an instance of the PATS problem. (b) Assembling the Sierpinski pattern with a TAS that has an appropriate seed assembly and a (coloured) tile set shown in (c).}
  \label{fig:sierpinski}
\end{figure}

In the literature, the seed assembly of a TAS is often taken to be a single seed tile \citep{Adleman2002,Rothemund2000} whereas we consider an L-shaped seed assembly. These boundaries can always be self-assembled using $m+n+1$ different tiles with strength-2 glues, but for practical purposes we allow for the possibility of using, for example, DNA origami techniques \citep{Fujibayashi2008} to construct these boundary conditions.

Due to constraint \textit{P1} the self-assembly process proceeds in a uniform manner directed from south-west to north-east. This paves the way for a simple characterization of deterministic TASs in the context of the PATS problem.
\begin{proposition} \label{prop:determinism}
Solutions $\mathscr{T} = (T,\mathcal{S},s,2)$ of the PATS problem are deterministic precisely when for each pair of glue types $(\sigma_1,\sigma_2)\in\Sigma^2$ there is at most one tile type $t\in T$ so that $\sigma_S(t)=\sigma_1$ and $\sigma_W(t)=\sigma_2$.
\end{proposition}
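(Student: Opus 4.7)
The plan is to prove the biconditional by first isolating a structural lemma about producible assemblies and then applying it on both sides. The lemma to establish is an \emph{order-ideal property}: if $\mathcal{A}$ is producible and $(x,y)$ lies in its domain with $x,y\geq 1$, then $(x-1,y)$ and $(x,y-1)$ lie in the domain as well. I would prove this by induction on the length of an assembly sequence $\mathcal{S}\rightarrow_{\mathscr{T}}\mathcal{A}_1\rightarrow_{\mathscr{T}}\cdots\rightarrow_{\mathscr{T}}\mathcal{A}$. The seed trivially has the property. For the inductive step, when a tile is newly appended at $(x,y)\in[1,m]\times[1,n]$, any already-placed N neighbour at $(x,y+1)$ would, by the inductive hypothesis, force $(x,y)$ itself to already be in the domain---contradicting that a tile is being newly placed there---and the symmetric argument rules out an already-placed E neighbour. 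Since constraint \textit{P1} caps every bond at strength~$1$, reaching temperature $\tau=2$ therefore requires both the S and W neighbours to be present with both glue bonds matching.

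The lemma has two consequences for each attachment step: the S and W neighbours of the attachment site are already placed, and the south and west glues of the appended tile are \emph{uniquely determined} by the glue pair exposed by those neighbours. The $(\Leftarrow)$ direction is now immediate: under the uniqueness condition, at most one tile type in $T$ realises the prescribed $(\sigma_S,\sigma_W)$ pair, so each producible assembly admits at most one extension at any site and $\mathscr{T}$ is deterministic.

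For $(\Rightarrow)$ I would argue by contrapositive. Suppose distinct $t_1,t_2\in T$ share south glue $\sigma_1$ and west glue $\sigma_2$. By Assumption~1, $t_1$ is placed at some position $(x,y)\in[1,m]\times[1,n]$ in some assembly sequence; let $\mathcal{A}$ denote the intermediate assembly just prior to this step. Its S and W neighbours at $(x,y-1)$ and $(x-1,y)$ expose the glues $\sigma_1$ and $\sigma_2$ respectively, so $t_2$ also satisfies the attachment condition at $(x,y)$. The two distinct one-step extensions of $\mathcal{A}$ witness non-determinism.

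The main obstacle is the order-ideal lemma, because the aTAM definition \emph{a priori} permits binding from any subset of the four neighbours; carrying the induction through so as to exclude north and east bonds at interior sites is what makes the clean ``south-and-west'' characterisation valid and reduces determinism of $\mathscr{T}$ to the combinatorial condition in the proposition.
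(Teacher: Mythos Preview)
The paper does not actually prove this proposition; it is stated as a ``simple characterization'' following the remark that, due to constraint \textit{P1}, ``the self-assembly process proceeds in a uniform manner directed from south-west to north-east.'' Your order-ideal lemma is precisely the formal content of that sentence, and your two implications are the natural way to extract the proposition from it. So your argument is correct and is exactly the justification the paper leaves implicit.

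One small point worth tightening: in the $(\Leftarrow)$ direction you conclude that ``each producible assembly admits at most one extension at any site,'' but your lemma only controls sites $(x,y)$ with $x,y\geq 1$, whereas the definition of determinism quantifies over all $(x,y)\in\Z^2$. To close this, strengthen the inductive hypothesis to also assert $\operatorname{dom}(\mathcal{A})\subseteq[0,m]\times[0,n]$; then any site outside $[1,m]\times[1,n]$ is either already occupied by the seed or has at most one neighbour in $\mathcal{A}$, so no strength-$2$ attachment is possible there and the case is vacuous. The same strengthening also justifies your tacit assumption in the $(\Rightarrow)$ direction that the position at which $t_1$ is placed lies in $[1,m]\times[1,n]$.
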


A simple observation reduces the work needed in finding minimal solutions of the PATS problem.
\begin{lemma}
The minimal solutions of the PATS problem are deterministic TASs.
\end{lemma}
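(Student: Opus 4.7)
The plan is to establish the contrapositive: if a PATS solution $\mathscr{T} = (T, \mathcal{S}, s, 2)$ fails to be deterministic, one can exhibit a strictly smaller solution, so $\mathscr{T}$ cannot be minimal. By Proposition~\ref{prop:determinism}, non-determinism supplies two distinct tiles $t_1, t_2 \in T$ with $\sigma_S(t_1) = \sigma_S(t_2)$ and $\sigma_W(t_1) = \sigma_W(t_2)$. The candidate smaller system is then $\mathscr{T}' := (T \setminus \{t_2\},\, \mathcal{S},\, s,\, 2)$, and all the work lies in verifying that $\mathscr{T}'$ still satisfies \emph{P1}--\emph{P3}.

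The key preliminary step is to formalise the south-west growth picture that the text invokes informally just after Proposition~\ref{prop:determinism}: in \emph{any} assembly produced by a PATS tile assembly system, each tile added at an interior position $(x,y) \in [m] \times [n]$ necessarily has both its south neighbour $(x,y-1)$ and its west neighbour $(x-1,y)$ already in place and glue-matched at the moment of placement. This is a short induction on the length of the assembly sequence using constraint~\emph{P1}: the seed covers only the S and W borders, so if the $k$-th placed tile sits at $(x,y)$ with $x,y \geq 1$, any previously placed tile at $(x,y+1)$ or $(x+1,y)$ would by the induction hypothesis have required $(x,y)$ itself as its own southern or western support, which is impossible. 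Hence at placement time neither the N nor the E neighbour of $(x,y)$ exists, and the bond strength $\geq \tau = 2$ can only be met by a simultaneous S and W match of strength~$1$ each.

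With this in hand the main argument is almost a single line. Let $\mathcal{A}'$ be any terminal assembly of $\mathscr{T}'$; since every $\mathscr{T}'$-step is also a $\mathscr{T}$-step we have $\mathcal{A}' \in \Prod \mathscr{T}$. If $\mathcal{A}'$ were not terminal in $\mathscr{T}$, some tile in $T \setminus T' = \{t_2\}$ would extend it at a position $(x,y)$; by the growth observation the required bond strength at $(x,y)$ is furnished entirely by the S and W neighbours of $(x,y)$ in $\mathcal{A}'$, and since $t_1 \in T'$ carries exactly those same S and W glues, $t_1$ also extends $\mathcal{A}'$ at $(x,y)$ with strength~$2$, contradicting terminality in $\mathscr{T}'$. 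Thus every terminal of $\mathscr{T}'$ is a terminal of $\mathscr{T}$ and so, by \emph{P2} and \emph{P3} applied to $\mathscr{T}$, has the full domain $[0,m] \times [0,n]$ and the correct colouring (inheriting the colouring $d$ restricted to $T'$). Consequently $\mathscr{T}'$ solves PATS with $|T'| = |T| - 1 < |T|$, contradicting the minimality of $\mathscr{T}$.

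The only delicate point I anticipate is the auxiliary south-west growth observation: without it one would have to worry that $t_2$'s extension at $(x,y)$ exploits an N or E match at which $t_1$'s glue disagrees, and the naive deletion of $t_2$ would no longer be justified. Once that induction is pinned down, the remainder is a short application of Proposition~\ref{prop:determinism}.
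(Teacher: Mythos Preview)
Your proof is correct and follows essentially the same route as the paper's: remove one of the two tiles with matching south and west glues, and show that every terminal assembly of the reduced system is already terminal in the original, whence \emph{P1}--\emph{P3} carry over. The only difference is that you spell out the south-west growth induction explicitly, whereas the paper simply asserts that ``new tiles are always attached by binding to south and west sides of the tile'' and leaves the verification to the reader.
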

\begin{proof}
For the sake of contradiction, suppose that $\mathscr{N} = (T,\mathcal{S},s,2)$ is a minimal solution to a PATS problem instance and that $\mathscr{N}$ is not deterministic. By the above proposition let tiles $t_1,t_2\in T$ be such that $\sigma_S(t_1)=\sigma_S(t_2)$ and $\sigma_W(t_1)=\sigma_W(t_2)$. Consider the simplified TAS $\mathscr{N}' = (T\smallsetminus\{t_2\},\mathcal{S},s,2)$. We show that this, too, is a solution to the PATS problem, which violates the minimality of $|T|$.

Suppose $\mathcal{A}\in\Term\mathscr{N}'$. If $\mathcal{A}\notin\Term\mathscr{N}$, then some $t\in T$ can be used to extend $\mathcal{A}$ in $\mathscr{N}$. If $t\in T\smallsetminus\{t_2\}$, then $t$ could be used to extend $\mathcal{A}$ in $\mathscr{N}'$, so we must have $t=t_2$. But since new tiles are always attached by binding to south and west sides of the tile, $\mathcal{A}$ could then be extended by $t_1$ in $\mathscr{N}'$. Thus, we conclude that $\mathcal{A}\in\Term\mathscr{N}$ and furthermore $\Term\mathscr{N}'\subseteq\Term\mathscr{N}$. This demonstrates that $\mathscr{N}'$ has property \textit{P2}. The properties \textit{P1} and \textit{P3} can be readily seen to hold for $\mathscr{N}'$ as well. In terms of $|T|$ we have found a more optimal solution---and a contradiction.
\qed
\end{proof}
\begin{assumption}
We consider only deterministic TASs in the sequel.
\end{assumption}

\section{A branch-and-bound algorithm}
\label{sec:algorithm}

We describe an exact algorithm to find minimal solutions to the PATS
problem. We extend the methods of \citep{Ma2008} to obtain an
exhaustive branch-and-bound (B\&B) algorithm.
The idea of Ma \& Lombardi \citep{Ma2008} (following experimental work
of \citep{Park2006}) is to start with an \emph{initial tile set} that
consists of $m\cdot n$ different tiles, one for each of the grid
positions in $[m]\times[n]$. Their algorithm then proceeds to merge
tile types in order to minimize $|T|$. We formalize this search
process as an exhaustive search in the set of all partitions of the
set $[m]\times[n]$.
In the following, we let a PATS instance be given by a fixed
$k$-coloured pattern $c:[m]\times[n]\to [k]$.

\subsection{The search space}

Let $X$ be the set of partitions of the set $[m]\times[n]$. For partitions $P,P'\in X$ we define a relation $\sqsubseteq$ so that
\begin{equation}
P \sqsubseteq P'\quad\Longleftrightarrow\quad \forall p'\in P' :\enspace\exists p\in P:\enspace p'\subseteq p\enspace.
\end{equation}
Now, $(X,\sqsubseteq)$ is a partially ordered set, and in fact, a lattice. If $P \sqsubseteq P'$ we say that $P'$ is a \emph{refinement} of $P$, or that $P$ is \emph{coarser} than $P'$.
Note that $P\sqsubseteq P'$ implies $|P|\leq |P'|$.

The colouring $c$ induces a partition $P(c)=\{c^{-1}(\{i\})\enspace|\enspace i\in [k]\}$ of the set $[m]\times[n]$.
In addition, since every (deterministic) solution $\mathscr{T} = (T,\mathcal{S},s,2)$ of the PATS problem uniquely produces some assembly $\mathcal{A}$, we associate with $\mathscr{T}$ a partition $P(\mathscr{T}) = \{\mathcal{A}^{-1}(\{t\})\enspace|\enspace t\in\mathcal{A}([m]\times[n])\}$. Here, $|P(\mathscr{T})|=|T|$ due to our Assumptions 1 and 2. With this terminology, the condition \textit{P3} in the definition of the PATS problem is equivalent to requiring that a TAS $\mathscr{T}$ satisfies
\begin{equation}
P(c) \sqsubseteq P(\mathscr{T})\enspace.
\end{equation}

We say that a partition $P\in X$ is \emph{constructible} if $P=P(\mathscr{T})$ for some deterministic TAS $\mathscr{T}$ with properties \textit{P1} and \textit{P2}. With this, we can rephrase our goal from the point of view of using partitions as the fundamental search space.
\begin{proposition}
A minimal solution to the PATS problem corresponds to a partition $P\in X$ such that $P$ is constructible, $P(c) \sqsubseteq P$ and $|P|$ is minimal.
\end{proposition}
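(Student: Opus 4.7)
The plan is to establish a bijective correspondence between PATS solutions $\mathscr{T}$ (on the given instance $c$) and constructible partitions $P$ satisfying $P(c)\sqsubseteq P$, in such a way that $|T|=|P|$, and then observe that minimizing one side minimizes the other. I would organize the argument as two directions, forward and backward, relying on the translation of \textit{P3} into the refinement condition already noted in the text.

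For the forward direction, I take an arbitrary solution $\mathscr{T}=(T,\mathcal{S},s,2)$ of the PATS problem and consider the associated partition $P(\mathscr{T})$. By Assumption~2, every tile participates in the (unique, by the previous lemma) terminal assembly $\mathcal{A}$, so the preimages $\mathcal{A}^{-1}(\{t\})$ are nonempty and distinct for distinct $t\in T$; hence $|P(\mathscr{T})|=|T|$. Since $\mathscr{T}$ is a deterministic TAS satisfying \textit{P1} and \textit{P2}, the partition $P(\mathscr{T})$ is constructible by definition. Finally, property \textit{P3} supplies a colouring $d:T\to[k]$ with $d\circ\mathcal{A}=c$ on $[m]\times[n]$, which immediately gives $P(c)\sqsubseteq P(\mathscr{T})$ as remarked above.

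For the backward direction, suppose $P\in X$ is constructible with $P(c)\sqsubseteq P$. By definition of constructibility, pick a deterministic TAS $\mathscr{T}=(T,\mathcal{S},s,2)$ satisfying \textit{P1} and \textit{P2} with $P(\mathscr{T})=P$; again $|T|=|P|$. The only thing missing is \textit{P3}, i.e.\ a colouring $d:T\to[k]$. Since $P(c)\sqsubseteq P$, every block of $P$ is contained in some single colour class $c^{-1}(\{i\})$, so for each tile type $t\in T$ we may unambiguously set $d(t)$ to be the common colour of all cells in the block $\mathcal{A}^{-1}(\{t\})$, where $\mathcal{A}$ is the unique terminal assembly. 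This $d$ witnesses \textit{P3}, so $\mathscr{T}$ is a genuine PATS solution with $|T|=|P|$.

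Combining the two directions, the map $\mathscr{T}\mapsto P(\mathscr{T})$ sends PATS solutions onto the set $\mathcal{C}=\{P\in X : P \text{ constructible},\ P(c)\sqsubseteq P\}$ and preserves cardinality ($|T|=|P|$). Thus $\min_{\mathscr{T}} |T| = \min_{P\in\mathcal{C}} |P|$, and a minimizer on either side corresponds to a minimizer on the other. I do not expect any real obstacle here; the only point that requires a moment's care is the well-definedness of the colouring $d$ in the backward direction, which is handled cleanly by the refinement hypothesis $P(c)\sqsubseteq P$.
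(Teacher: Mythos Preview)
Your argument is correct and is exactly the natural unpacking of the correspondence the paper has set up; the paper itself does not give a separate proof of this proposition but presents it as an immediate rephrasing following from the definition of constructibility, the identity $|P(\mathscr{T})|=|T|$, and the observation that \textit{P3} is equivalent to $P(c)\sqsubseteq P(\mathscr{T})$. One tiny attribution quibble: the uniqueness of the terminal assembly you invoke in the forward direction comes from determinism (Assumption~2) together with finiteness of the domain, not from the ``previous lemma'' (which only says minimal solutions are deterministic).
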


For example, the 2-coloured pattern in Figure \ref{fig:pattern}a defines a 2-part partition, $A$, say. The 7-part partition $M$ in Figure \ref{fig:pattern}b is a refinement of $A$ ($A\sqsubseteq M$) and in fact, $M$ is constructible (see Figure \ref{fig:construction}b) and corresponds to a minimal solution of the PATS problem defined by the pattern $A$.

\begin{figure}[t]
  \centering
  \includegraphics[width=.6\textwidth]{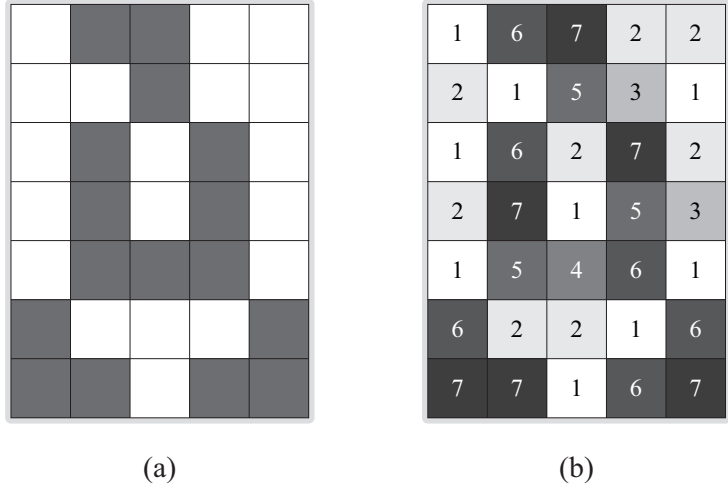}
  \caption{(a) Partition $A$. (b) A partition $M$ that is a refinement of $A$ with $|M|=7$ parts.}
  \label{fig:pattern}
\end{figure}

\subsection{Determining constructibility}
In this section we give an algorithm for deciding the constructibility of a given partition in polynomial time. To do this, we use the concept of most general (or least constraining) tile assignments. For simplicity, we assume the set of glue labels $\Sigma$ to be infinite.

\begin{definition}
Given a partition $P$ of the set $[m]\times[n]$, a \emph{most general tile assignment} (MGTA) is a function $f:P\to\Sigma^4$ such that
\begin{enumerate}[{A}1. ]
	\item When every position in $[m]\times[n]$ is assigned a tile type according to $f$, any two adjacent positions agree on the glue type of the side between them.
	\item For all assignments $g:P\to\Sigma^4$ satisfying \textit{A1} we have\footnote{To shorten the notation we write $f(p)_D$ instead of $\sigma_D(f(p))$.}
	\begin{equation}
	f(p_1)_{D_1}=f(p_2)_{D_2} \quad \implies \quad g(p_1)_{D_1}=g(p_2)_{D_2}
	\end{equation}
	for all $(p_1,D_1),(p_2,D_2)\in P\times\mathcal{D}$.
\end{enumerate}
\end{definition}

To demonstrate this concept we present a most general tile assignment $f:I\to\Sigma^4$ for the \emph{initial partition} $I=\{\{a\}\enspace|\enspace a\in[m]\times[n]\}$ in Figure \ref{fig:construction}a and a MGTA for the partition of Figure \ref{fig:pattern}b in Figure \ref{fig:construction}b.

\begin{figure}[t]
  \centering
  \includegraphics[width=\textwidth]{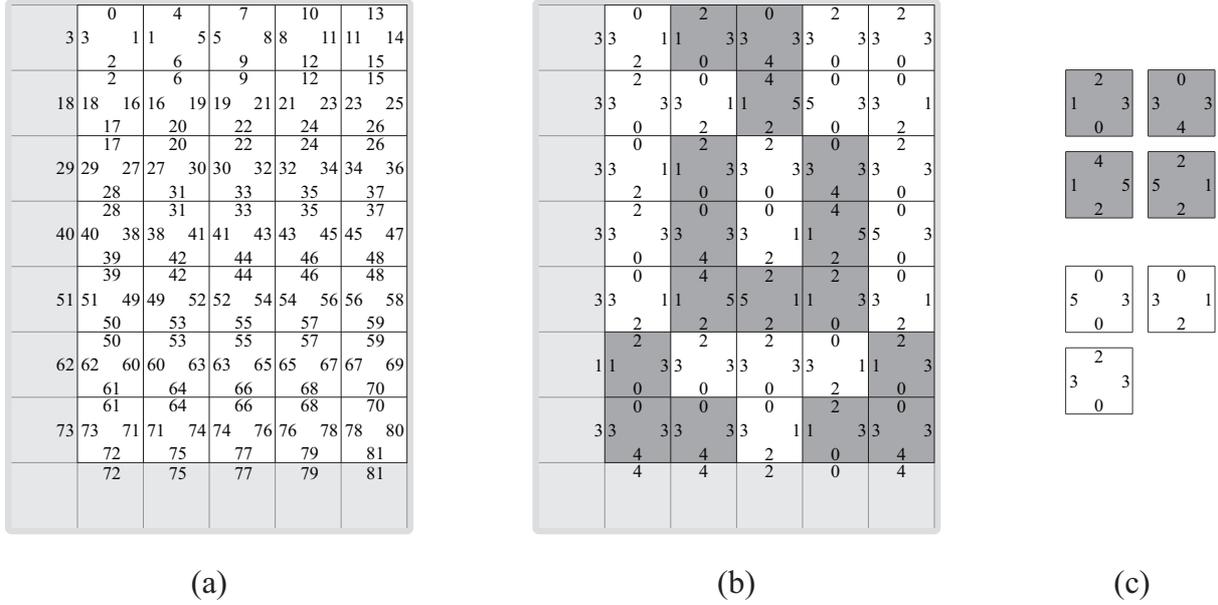}
  \caption{(a) A MGTA for the constructible initial partition $I$ (with a seed assembly in place). (b) Finished assembly for the pattern from Figure \ref{fig:pattern}a. The tile set to construct this assembly is given in (c).}
  \label{fig:construction}
\end{figure}

Given a partition $P\in X$ and a function $f:P\to\Sigma^4$, we say that $g:P\to\Sigma^4$ is obtained from $f$ by \emph{merging glues $a$ and $b$} if for all $(p,D)\in P\times\mathcal{D}$ we have
\begin{equation}
g(p)_D=\left\{
\begin{aligned}
a,\qquad&\text{if }f(p)_D=b\\
f(p)_D,\qquad&\text{otherwise}
\end{aligned}\right.\quad.
\end{equation}

A most general tile assignment for a partition $P\in X$ can be found as follows. We start with a function $f_0:P\to\Sigma^4$ that assigns to each tile edge a unique glue type, or in other words, a function $f_0$ so that the mapping $(p,D)\mapsto f_0(p)_D$ is injective. Next, we go through all pairs of adjacent positions in $[m]\times[n]$ (in some order) and require their matching sides to have the same glue type by merging the corresponding glues. This process generates a sequence of functions $f_0,f_1,f_2,\ldots,f_N=f$ and terminates after $N\leq 2mn$ steps.

\begin{lemma} \label{le:mgta}
The above algorithm generates a most general tile assignment.
\end{lemma}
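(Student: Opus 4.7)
The plan is to track, for each intermediate function $f_i$ in the sequence, the equivalence relation $\sim_i$ on $P\times\mathcal{D}$ defined by
\[
(p_1,D_1)\sim_i (p_2,D_2) \quad\Longleftrightarrow\quad f_i(p_1)_{D_1}=f_i(p_2)_{D_2}.
\]
Property \textit{A1} says that $\sim_N$ contains every pair consisting of the matching sides of two adjacent positions, and \textit{A2} says that $\sim_N$ is in fact the \emph{finest} equivalence relation with this property (equivalently, that any equivalence coming from an \textit{A1}-satisfying $g$ is coarser than $\sim_N$). So the whole proof reduces to showing that the merging procedure computes exactly this coarsest equivalence.

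For \textit{A1}, I would argue by a simple monotonicity observation: merging glues $a$ and $b$ can only add pairs to the current equivalence relation, never remove them. Hence once the algorithm has processed the adjacency between two neighbouring cells and identified the two glues facing each other, those two edges remain $\sim_j$-equivalent for all later $j\ge i$. Since the algorithm enumerates every adjacency in $[m]\times[n]$, the terminal relation $\sim_N$ contains all adjacency pairs as required.

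For \textit{A2}, I would proceed by induction on $i$, claiming that $\sim_i$ is the smallest equivalence on $P\times\mathcal{D}$ containing the first $i$ adjacency pairs processed so far. The base case $i=0$ is immediate from the injectivity of $f_0$: $\sim_0$ is the identity. For the inductive step, suppose the $(i{+}1)$-th step merges glues $a=f_i(p_1)_{D_1}$ and $b=f_i(p_2)_{D_2}$ coming from an adjacent pair. By the definition of the merge, $f_{i+1}$ identifies exactly the $\sim_i$-classes of $a$ and $b$ and leaves all other classes intact, which is precisely the transitive closure of $\sim_i$ together with the new pair $((p_1,D_1),(p_2,D_2))$. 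Now for any $g:P\to\Sigma^4$ satisfying \textit{A1}, the relation $\sim_g$ induced by $g$ via the same recipe is an equivalence relation that also contains every adjacency pair. Since $\sim_N$ is the smallest such equivalence, we get $\sim_N\,\subseteq\,\sim_g$, which is exactly the implication in \textit{A2}.

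The main obstacle is the inductive step: one must verify carefully that a single merge operation really performs the join of $\sim_i$ with the newly added pair, rather than something larger. This is not quite automatic, because after several merges a single glue label can already be shared by many edges, so renaming $b\mapsto a$ simultaneously identifies two entire $\sim_i$-classes. But this is precisely the closure under one more pair, since equivalence relations are closed under taking unions of classes. Once this is cleanly stated, both \textit{A1} and \textit{A2} follow and the lemma is proved.
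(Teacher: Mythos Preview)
Your proposal is correct and follows essentially the same route as the paper: both recast tile assignments as equivalence relations (equivalently, partitions) on $P\times\mathcal{D}$ and argue that the merging procedure computes the finest such relation containing all the adjacency constraints, from which \textit{A2} follows immediately. The only difference is cosmetic---the paper phrases the result as computing the greatest lower bound $\inf\mathcal{L}$ in the partition lattice of $P\times\mathcal{D}$, whereas you unroll this into an explicit induction on the merge steps.
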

\begin{proof}
By the end, we are left with a function $f$ that satisfies property \textit{A1} by construction. To see why property \textit{A2} is satisfied, we again use the language of partitions.

Any assignment gives rise to a set of equivalence classes (or a partition) on $P\times\mathcal{D}$: Elements that are assigned the same glue type reside in the same equivalence class. The initial assignment $f_0$ gives each part-direction pair a unique glue type, and thus, corresponds to the initial partition $J=\{\{a\}\enspace|\enspace a\in P\times\mathcal{D}\}$. In the algorithm, any glue merging operation corresponds to the combination of two equivalence classes.

The algorithm goes through a list of pairs $\{\{a_i,b_i\}\}_{i=0}^{N-1}$ of elements from $P\times\mathcal{D}$ that are required to have the same glue type. In this way, the list records necessary conditions for property \textit{A1} to hold. This is to say that every assignment satisfying \textit{A1} has to correspond to a partition that is coarser than each of the partitions in $\mathcal{L}=\{J[a_i,b_i]\}_{i=0}^{N-1}$, where $J[a,b]$ is the partition obtained from the initial partition by combining parts $a$ and $b$. Since the set $(P\times\mathcal{D}, \sqsubseteq)$ is a lattice, there exists a unique greatest lower bound $\inf \mathcal{L}$ of the partitions in $\mathcal{L}$. This is exactly the partition that the algorithm calculates in the form of the assignment $f$. As a greatest lower bound, $\inf \mathcal{L}$ is finer than any partition corresponding to an assignment satisfying \textit{A1}, but this is precisely the requirement for condition \textit{A2}.\qed
\end{proof}

The above analysis also gives the following.
\begin{corollary}
For a given partition, MGTAs are unique up to relabeling of the glue types.
\end{corollary}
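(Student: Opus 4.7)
The plan is to extract uniqueness directly from a symmetric application of property \textit{A2}. Suppose $f, f' : P \to \Sigma^4$ are both MGTAs. Since $f'$ is an MGTA it satisfies \textit{A1}, so I can use it as the witness $g$ in \textit{A2} applied to $f$; this gives
\[ f(p_1)_{D_1} = f(p_2)_{D_2} \implies f'(p_1)_{D_1} = f'(p_2)_{D_2}. \]
Swapping the roles of $f$ and $f'$ yields the converse implication. Thus $f$ and $f'$ induce exactly the same equivalence relation on $P \times \mathcal{D}$, where $(p_1, D_1) \sim (p_2, D_2)$ iff the two part-direction pairs are assigned a common glue.

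Once the equivalence relations coincide, the desired relabeling is essentially bookkeeping. I would define a map $\pi$ on the image of $f$ by $\pi(f(p)_D) := f'(p)_D$: the shared equivalence classes make this well-defined and injective, and it is surjective onto the image of $f'$. Since $\Sigma$ is infinite, $\pi$ can be extended to a bijection $\Sigma \to \Sigma$, which gives $f' = \pi \circ f$ componentwise — that is, $f'$ is obtained from $f$ by relabeling the glue types.

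I do not expect a real obstacle, because the substantive content was already proved in Lemma~\ref{le:mgta}: every MGTA corresponds via its glue-equivalence relation on $P \times \mathcal{D}$ to the canonical partition $\inf \mathcal{L}$, which depends only on $P$ and the adjacency structure of $[m] \times [n]$. The corollary just restates this canonicity at the level of concrete glue labels, with the infiniteness of $\Sigma$ ensuring that the partial bijection between the images of $f$ and $f'$ can be completed painlessly.
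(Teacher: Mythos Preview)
Your argument is correct. The symmetric application of \textit{A2}---using each MGTA as the witness $g$ for the other---immediately forces the two induced equivalence relations on $P\times\mathcal{D}$ to coincide, and your construction of the relabeling $\pi$ (well-defined and injective by that coincidence, extendable to a bijection on $\Sigma$ by infiniteness) is the right way to finish.

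The paper takes a slightly different route: it does not argue directly from the definition but instead reads the corollary off the proof of Lemma~\ref{le:mgta}, where it was shown that the glue-equivalence partition of \emph{any} assignment satisfying \textit{A1} must lie above $\inf\mathcal{L}$, while any MGTA (by \textit{A2}) must lie below every \textit{A1}-assignment and hence at or below $\inf\mathcal{L}$; together these pin the MGTA's partition to exactly $\inf\mathcal{L}$. Your approach is more self-contained---it never needs the algorithmic construction or the lattice infimum---whereas the paper's route has the advantage of identifying the common partition explicitly as $\inf\mathcal{L}$. You allude to this connection in your final paragraph, so you clearly see both arguments; either one suffices here.
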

Thus, for each partition $P$, we take \emph{the MGTA for $P$} to be some canonical representative from the class of MGTAs for $P$.

For efficiency purposes, it is worth mentioning that MGTAs can be generated iteratively: A partition $P\in X$ can be obtained by repeatedly combining parts starting from the initial partition $I$:
\begin{equation}
I = P_1 \sqsupseteq P_2 \sqsupseteq \cdots \sqsupseteq P_N = P\enspace.
\end{equation}
As a base case, a MGTA for $I$ can be computed by the above algorithm. A MGTA for each $P_{i+1}$ can be computed from a MGTA for the previous partition $P_i$ by just a small modification: Let a MGTA $f_i:P_i\to\Sigma^4$ be given for $P_i$ and suppose $P_{i+1}$ is obtained from $P_i$ by combining parts $p_1,p_2\in P_i$. Now, a MGTA $f_{i+1}$ for $P_{i+1}$ can be obtained from $f_i$ by \emph{merging tiles} $f_i(p_1)$ and $f_i(p_2)$, that is, merging the glue types on the four corresponding sides.

We now give the conditions for a partition to be constructible in terms of MGTAs.
\begin{lemma}
A partition $P\in X$ is constructible iff the MGTA $f:P\to\Sigma^4$ for $P$ is injective and the tile set $f(P)$ is deterministic in the sense of Proposition \ref{prop:determinism}.
\end{lemma}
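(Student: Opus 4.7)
The plan is to prove the two implications separately. The forward direction argues that constructibility forces $f$ to be injective and $f(P)$ to be deterministic by exploiting the universal property \textit{A2} of the MGTA. The backward direction constructs an explicit TAS witnessing constructibility, using $f(P)$ as the tile set and relying on \textit{A1} plus determinism to verify unique production.

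For the $(\Rightarrow)$ direction, suppose $P = P(\mathscr{T})$ for a deterministic TAS $\mathscr{T} = (T,\mathcal{S},s,2)$ with properties \textit{P1}, \textit{P2}. Since $\mathscr{T}$ uniquely produces an assembly $\mathcal{A}$ in which every part $p \in P$ is filled by a single tile type, the rule $g(p) := $ (that common tile type) defines a function $g : P \to \Sigma^4$. Adjacent positions in $[m]\times[n]$ share a glue (they were joined with strength $\geq 2$ in some assembly sequence, and tiles have strength-1 glues, so both sides of any shared edge must match), so $g$ satisfies \textit{A1}. Applying \textit{A2} of the MGTA $f$ gives, for all $(p_1,D_1),(p_2,D_2)$,
\begin{equation}
f(p_1)_{D_1}=f(p_2)_{D_2}\;\implies\; g(p_1)_{D_1}=g(p_2)_{D_2}.
\end{equation}
From this I read off both conclusions. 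If $f(p_1)=f(p_2)$ for $p_1\neq p_2$, then $g(p_1)=g(p_2)$, so the positions of $p_1$ and $p_2$ carry the same tile type in $\mathcal{A}$, contradicting $P = P(\mathscr{T})$. Hence $f$ is injective. Similarly, if two distinct tiles $f(p_1),f(p_2)\in f(P)$ shared both south and west glues, then $g(p_1)$ and $g(p_2)$ would share their south and west glues; they are distinct tiles of $T$ (since $p_1\neq p_2$ yields distinct tile types in $P(\mathscr{T})$), violating determinism of $\mathscr{T}$ via Proposition~\ref{prop:determinism}. Hence $f(P)$ is deterministic.

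For the $(\Leftarrow)$ direction, assume $f$ is injective and $f(P)$ is deterministic. Build the TAS $\mathscr{T} = (f(P),\mathcal{S},s,2)$ where $s$ assigns strength $1$ to matching glues and $0$ otherwise, and where $\mathcal{S}$ is the L-shaped seed on $[0,m]\times\{0\}\cup\{0\}\times[0,n]$ whose north/east glues match the south/west glues of the tiles $f(p)$ at the corresponding border positions (this is well defined by \textit{A1}). Define the target assembly $\mathcal{A}:[m]\times[n]\to\Sigma^4$ by $\mathcal{A}(x,y)=f(p)$ where $p\ni (x,y)$. Property \textit{A1} guarantees that all adjacent positions in $\mathcal{A}$ agree on the shared glue, so at any boundary-frontier position $(x,y)$ the tile $f(p)$ binds with strength $2$. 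I would then argue by induction along any SW-to-NE assembly sequence that (i) at every frontier position there is \emph{some} tile (namely $f(p)$) that attaches, and (ii) by determinism of $f(P)$ and Proposition~\ref{prop:determinism} it is the \emph{only} one that attaches. Thus $\mathscr{T}$ is deterministic, terminates on $[0,m]\times[0,n]$ (properties \textit{P2}, \textit{P1} hold by construction) and uniquely produces $\mathcal{A}$. Injectivity of $f$ finally gives $P(\mathscr{T}) = \{\mathcal{A}^{-1}(\{t\}) : t\in f(P)\} = P$, so $P$ is constructible.

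The main obstacle I foresee is the backward induction step: one has to be careful that the tiles in $f(P)$ can bind \emph{only} at positions where both the south and west neighbors are already present, and that the determinism hypothesis on $f(P)$ rules out attaching a wrong tile at the frontier, including corner and edge cases where the seed provides one of the two matching neighbors. Once this monotone SW-to-NE growth is established, everything else follows from \textit{A1}, \textit{A2}, and the injectivity of $f$.
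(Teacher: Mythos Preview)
Your proof is correct and follows essentially the same approach as the paper: in both directions you use exactly the same key ideas (the assignment $g$ induced by the produced assembly together with property \textit{A2} for $(\Rightarrow)$, and the TAS $\mathscr{T}=(f(P),\mathcal{S},s,2)$ with a suitably chosen seed for $(\Leftarrow)$). Your write-up is in fact more detailed than the paper's, which simply asserts unique production and then concludes $P(\mathscr{T})=P$ via the cardinality count $|P|=|f(P)|=|P(\mathscr{T})|$; your explicit SW-to-NE induction and the direct computation of $P(\mathscr{T})$ from injectivity amount to spelling out what the paper leaves implicit.
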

\begin{proof}
``$\Rightarrow$'': Let $P\in X$ be constructible and let the MGTA $f:P\to\Sigma^4$ for $P$ be given. Let $\mathscr{T}$ be a deterministic TAS such that $P(\mathscr{T})=P$. The uniquely produced assembly of $\mathscr{T}$ induces a tile assignment $g:P\to\Sigma^4$ that satisfies property \textit{A1}. Now using property \textit{A2} for the MGTA $f$ we see that any violation of the injectivity of $f$ or any violation of the determinism of the tile set $f(P)$ would imply such violations for $g$. But since $g$ corresponds to a constructible partition, no violations can occur for $g$ and thus none for $f$.

``$\Leftarrow$'': Let $f:P\to\Sigma^4$ be an injective MGTA with deterministic tile set $f(P)$. Because $f(P)$ is deterministic, we can choose glue types for a seed assembly $\mathcal{S}$ so that the westernmost and southernmost tiles fall into place according to $f$ in the self-assembly process. The TAS $\mathscr{T}=(f(P), \mathcal{S},s,2)$, with appropriate glue strengths $s$, then uniquely produces a terminal assembly that agrees with $f$ on $[m]\times[n]$. This gives $P(\mathscr{T}) \sqsubseteq P$, but since $f$ is injective, $|P|=|f(P)|=|P(\mathscr{T})|$ and so $P(\mathscr{T})=P$.
\qed
\end{proof}

\subsection{An initial search DAG}

Our algorithm performs an exhaustive search in the lattice
$(X,\sqsubseteq)$ searching for constructible partitions. In the
search, we maintain and incrementally update MGTAs for every partition
we visit. First, we describe simple branching rules to obtain a rooted directed acyclic graph search structure and later give rules to prune this DAG to a node-disjoint search tree.

The root of the DAG is taken to be the initial partition
$I$ that is always constructible. For each partition $P\in X$ we next define the set $C(P)\subseteq
X$ of children of $P$. Our algorithm always proceeds by combining
parts of the partition currently being visited, so for each $P'\in
C(P)$ we will have $P'\sqsubseteq P$.
Say we visit a partition $P\in X$. We have two possibilities:
\begin{enumerate}[C1. ]
	\item \emph{$P$ is constructible:}
	\begin{enumerate}[1. ]
		\item If $P$ is not a refinement of the target pattern $P(c)$, that is if $P(c)\not\sqsubseteq P$, we can drop this branch of the search, since no possible descendant  $P'\sqsubseteq P$ can be a refinement of $P(c)$ either. (i.e. $C(P)=\tyhja$)
		\item In case $P(c)\sqsubseteq P$, we can use the MGTA for $P$ to give a concrete solution to the PATS problem instance defined by the colouring $c$. To continue the search and to find more optimal solutions we consider each pair of parts $\{p_1,p_2\}\subseteq P$ in turn and recursively visit the partition $P[p_1,p_2]$ where the two parts are combined. In fact, by the above analysis, it is sufficient to consider only pairs of the same colour:
\begin{equation}	
C(P)=\{P[p_1,p_2]\enspace|\enspace p_1,p_2\in P,\enspace p_1\neq p_2,\enspace \exists k\in P(c): p_1,p_2\subseteq k\}\enspace.
\end{equation}
	\end{enumerate}
	\item \emph{$P$ is not constructible:} In this case the MGTA $f$ for $P$ gives $f(p_1)_S=f(p_2)_S$ and $f(p_1)_W=f(p_2)_W$ for some parts $p_1\neq p_2$. We continue the search from partition $P[p_1,p_2]$:
\begin{equation}
C(P)=\{P[p_1,p_2]\}\enspace.
\end{equation}
\end{enumerate}

To guarantee that our algorithm finds the optimal solution in the case C2 above, we need the following.
\begin{lemma}
Let $P\in X$ be a non-constructible partition, $f$ the MGTA for $P$ and $p_1,p_2\in P$, $p_1\neq p_2$, parts such that $f(p_1)_S=f(p_2)_S$ and $f(p_1)_W=f(p_2)_W$. For all constructible $C\sqsubseteq P$ we have $C \sqsubseteq P[p_1,p_2]$.
\end{lemma}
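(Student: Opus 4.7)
The plan is to show that any constructible $C \sqsubseteq P$ must place $p_1$ and $p_2$ in the same part; this is equivalent to $C \sqsubseteq P[p_1,p_2]$. Write $c_1, c_2 \in C$ for the parts containing $p_1, p_2$ respectively, and let $g : C \to \Sigma^4$ be the MGTA for $C$. The goal becomes proving $c_1 = c_2$.

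The key move is to pull $g$ back through the coarsening $C \sqsubseteq P$. Define $g' : P \to \Sigma^4$ by $g'(p) = g(c(p))$, where $c(p) \in C$ is the unique part containing $p$. For any position $(x,y) \in [m]\times[n]$, the tile assigned by $g'$ (viewed position-wise) coincides with the tile assigned by $g$, so $g'$ inherits property \textit{A1} from $g$. Now apply property \textit{A2} of the MGTA $f$ for $P$ to $g'$: from the hypotheses $f(p_1)_S = f(p_2)_S$ and $f(p_1)_W = f(p_2)_W$ we obtain $g'(p_1)_S = g'(p_2)_S$ and $g'(p_1)_W = g'(p_2)_W$, i.e.\
\begin{equation}
g(c_1)_S = g(c_2)_S \quad\text{and}\quad g(c_1)_W = g(c_2)_W.
\end{equation}

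To finish, suppose for contradiction that $c_1 \neq c_2$. Since $C$ is constructible, the previous lemma gives that $g$ is injective and that $g(C)$ is deterministic in the sense of Proposition~\ref{prop:determinism}. Injectivity yields $g(c_1) \neq g(c_2)$, so $g(C)$ contains two distinct tile types sharing both south and west glues, contradicting determinism. Hence $c_1 = c_2$.

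Finally, I translate $c_1 = c_2$ into the desired $\sqsubseteq$ relation. The parts of $P[p_1,p_2]$ are the parts of $P$ other than $p_1, p_2$, together with $p_1 \cup p_2$. For any $p \in P$ with $p \notin \{p_1,p_2\}$, the refinement $C \sqsubseteq P$ already supplies a $c \in C$ with $p \subseteq c$; for the part $p_1 \cup p_2$, the coincidence $c_1 = c_2$ gives $p_1 \cup p_2 \subseteq c_1 \in C$. Thus $C \sqsubseteq P[p_1,p_2]$. The only point requiring real care is the inheritance of \textit{A1} by $g'$: once one observes that the tile assigned by $g'$ to each grid cell is literally the same as the tile assigned by $g$, the remainder is a direct application of the defining properties of MGTAs and of the constructibility characterization.
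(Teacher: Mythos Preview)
Your proof is correct and follows essentially the same route as the paper: pull back the MGTA $g$ for $C$ along the coarsening $C\sqsubseteq P$ to an assignment $g'$ on $P$ satisfying \textit{A1}, use \textit{A2} for $f$ to transfer the south/west glue equalities to $g'$, and then invoke constructibility of $C$ to force $p_1$ and $p_2$ into the same part. The only difference is cosmetic---you spell out the appeal to the constructibility characterization and the final $\sqsubseteq$ verification more explicitly than the paper does.
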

\begin{proof}
Let $P$, $f$, $p_1$ and $p_2$ be as in the statement of the lemma. Let $C\sqsubseteq P$ be a constructible partition and $g:C\to\Sigma^4$ the MGTA for $C$. Since $C$ is coarser than $P$ we can obtain from $g$ a tile assignment $g':P\to\Sigma^4$ such that $g'(p)=g(q)$, where $q\in C$ is the unique part for which $p\subseteq q$. The assignment $g'$ has property \textit{A1} and so using \textit{A2} for the MGTA $f$ we get that
\begin{equation}
f(p_1)_S=f(p_2)_S \enspace\&\enspace f(p_1)_W=f(p_2)_W 
\quad\Longrightarrow \quad g'(p_1)_S=g'(p_2)_S \enspace\&\enspace g'(p_1)_W=g'(p_2)_W\enspace.
\end{equation}
Now, since $C$ is constructible, the identities $g(q_1)_S=g(q_2)_S$ and $g(q_1)_W=g(q_2)_W$ can not hold for any two different parts $q_1,q_2\in C$. Looking at the definition of $g'$, we conclude that $p_1\subseteq q$ and $p_2 \subseteq q$ for some $q \in C$. This demonstrates $C \sqsubseteq P[p_1,p_2]$. \qed
\end{proof}

\subsection{Pruning the DAG to a search tree}

Computational resources should be saved by not visiting any
partition twice. To keep the branches in our search structure
node-disjoint, we maintain a list of graphs that store restrictions on
the choices the search can make.

For each partition $P \sqsupseteq P(c)$ we associate a family of
undirected graphs $\{G^P_k\}_{k\in P(c)}$, one for each colour region
of the pattern $P(c)$. Every part in $P$ is represented by a vertex in
the graph corresponding to the colour of the part. More formally, the
vertex set $V(G^P_k)$ is taken to be those parts
$p\in P$ for which $p\subseteq k$. (So now, $\bigcup_{k\in P(c)}
V(G^P_k) = P$.) An edge $\{p_1,p_2\}\in E(G^P_k)$ indicates that the
parts $p_1$ and $p_2$ are not allowed ever to be combined in the
search branch in question. When we start our search with the initial
partition $I$, the edge sets are initially empty,
$E(G^I_k)=\tyhja$. At each partition $P$, the graphs $\{G^P_k\}_{k\in
  P(c)}$ have been determined inductively and the graphs for those
children $P'\in C(P)$ that we visit are defined as follows.

\begin{enumerate}[D1. ]
	\item \emph{If $P$ is constructible:} We choose some ordering $\{p_i,q_i\}$, $i=1,\ldots,N$, for similarly coloured pairs of parts. Define $l_i\in P(c)$, $1\leq i \leq N$ to be the colour of the pair $\{p_i,q_i\}$, so that $p_i,q_i\subseteq l_i$. Now, we visit a partition $P[p_i,q_i]$ if and only if $\{p_i,q_i\}\notin E(G^P_{l_i})$. When we decide to visit a child partition $P'=P[p_j,q_j]$, we define the edge sets $\{E(G^{P'}_k)\}_{k\in P(c)}$ as follows:
	\begin{enumerate}[1. ]
		\item We start with the graphs $\{G^P_k\}_{k\in P(c)}$ and add the edges $\{p_i,q_i\}$ for all $1\leq i < j$ to their corresponding graphs. Call the resulting graphs $\{G^\star_k\}_{k\in P(c)}$.
		\item Finally, as we combine the parts $p_j$ and $q_j$ to obtain the partition $P[p_j,q_j]$, we merge the vertices $p_j$ and $q_j$ in the graph $G^\star_{l_j}$ (After merging, the neighbourhood of the new vertex $p_j \cup q_j$ is the union of the neighbourhoods for $p_j$ and $q_j$ in $G^\star_{l_j}$). The graphs $\{G^{P'}_k\}_{k\in P(c)}$ follow as a result.
	\end{enumerate}
	\item \emph{If $P$ is not constructible:} Here, the MGTA for $P$ suggests a single child partition $P'=P[p_1,p_2]$ for some $p_1,p_2\subseteq l\in P(c)$. If $\{p_1,p_2\}\in E(G^P_l)$, we terminate this branch of the search. Otherwise, we define the graphs $\{G^{P'}_k\}_{k\in P(c)}$ to be the graphs $\{G^P_k\}_{k\in P(c)}$, except that in $G^{P'}_l$ the vertices $p_1$ and $p_2$ have to be merged.
\end{enumerate}

One can see that the outcome of this pruning process is a search tree
that has node-disjoint branches and one in which every possible
constructible partition is still guaranteed to be found.  Figure
\ref{fig:lattice} presents a sketch of the search tree.

Note that we
are not usually interested in finding every constructible partition
$P\in X$, but only in finding a minimal one (in terms of $|P|$). Next,
we give an efficient method to lower-bound the partition sizes of a
given search branch.

\begin{figure}[t]
  \centering
  \includegraphics[width=0.85\textwidth]{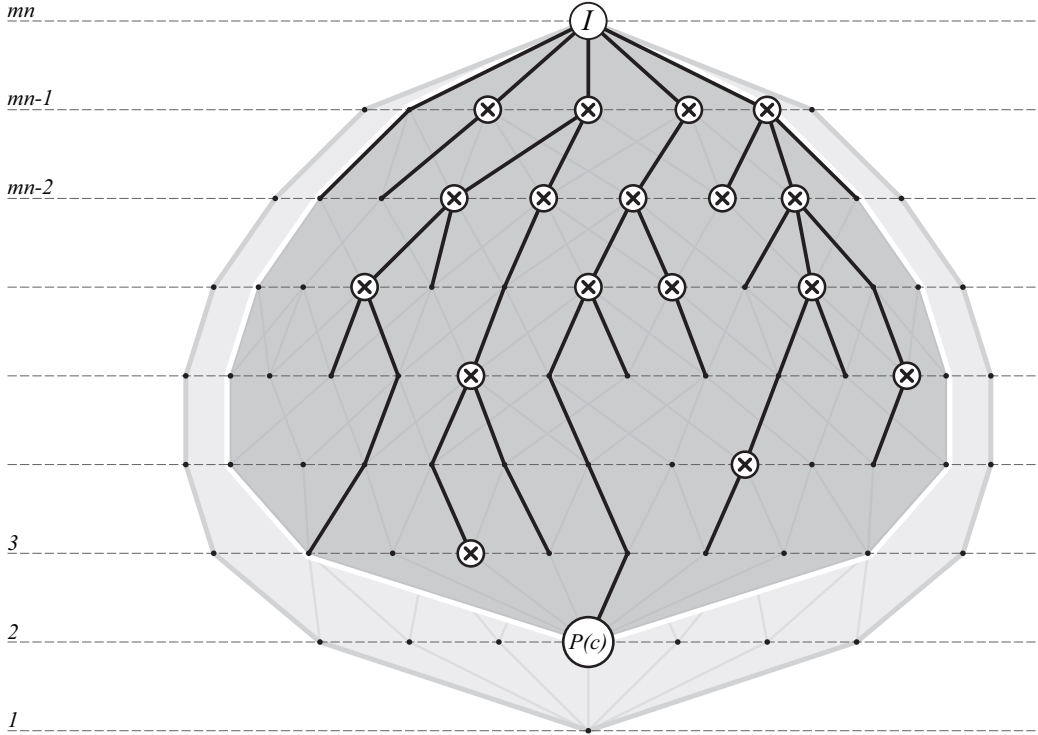}
  \caption{The search tree in the lattice $(X,\sqsubseteq)$. We start with the initial partition $I$ of size $|I|=mn$. The partition $P(c)$ defines the PATS problem instance: We search for constructible partitions (drawn as crosses) in the sublattice (shaded with darker grey) consisting of those partitions that are refinements of $P(c)$. The search tree branches only at the constructible partitions and the tree branches are node-disjoint.}
  \label{fig:lattice}
\end{figure}

\subsection{The bounding function}

Given a root $P\in X$ of some subtree of the search tree, we ask: What
is the smallest partition that can be found from this subtree?
The nodes in the subtree rooted at $P$ consists of those partitions
$P'\sqsubseteq P$ that can be obtained from $P$ by merging pairs of
parts that are not forbidden by the graphs $\{G^P_k\}_{k\in
  P(c)}$. This merging process halts precisely when all the graphs
$\{G^{P'}_k\}_{k\in P(c)}$ have beed reduced into cliques. As is well
known, the size of the smallest clique that a graph $G$ can be turned
into by merging non-adjacent vertices is given by the \emph{chromatic
  number} $\chi(G)$ of the graph $G$. This immediately gives the
following.
\begin{proposition}
For every $P'\sqsubseteq P$ in the subtree rooted at $P$ and constrained by $\{G^P_k\}_{k\in P(c)}$, we have
\begin{equation}
\sum_{k\in P(c)}\chi(G^P_k) \enspace \leq \enspace |P'|\enspace.
\end{equation}
\end{proposition}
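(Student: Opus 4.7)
The plan is to bound $|V(G^{P'}_k)|$ from below by $\chi(G^P_k)$ for each colour $k \in P(c)$ separately and then sum. The first observation is that, since $P' \sqsubseteq P$, every part $q \in P'$ with $q \subseteq k$ is a disjoint union of parts of $P$ of the same colour $k$, giving a natural surjection $\pi_k : V(G^P_k) \to V(G^{P'}_k)$ sending each $p \in V(G^P_k)$ to the unique $q \in V(G^{P'}_k)$ containing it. The desired inequality follows at once if $\pi_k$ is a proper vertex colouring of $G^P_k$, because then $\chi(G^P_k) \leq |V(G^{P'}_k)|$, and summing over $k$ yields $\sum_k \chi(G^P_k) \leq \sum_k |V(G^{P'}_k)| = |P'|$.

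The substantive step is showing that $\pi_k$ is proper. I would proceed by induction along the merging sequence $P = P_0 \sqsupseteq P_1 \sqsupseteq \cdots \sqsupseteq P_N = P'$ realised in the search tree. At each step $P_i \to P_{i+1}$, by the branching rules D1 and D2, two parts $p_1, p_2$ of the same colour $l$ are combined, and the rules explicitly forbid doing so unless $\{p_1,p_2\} \notin E(G^{P_i}_l)$. The graph $G^{P_{i+1}}_l$ is then the quotient of $G^{P_i}_l$ in which $p_1$ and $p_2$ are identified and their neighbourhoods unioned. Hence the one-step quotient map $G^{P_i}_l \to G^{P_{i+1}}_l$ is edge-preserving and never collapses an edge into a loop; the graphs for the other colours $k \neq l$ are unchanged, so the same holds trivially there. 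Composing these one-step maps colour by colour gives $\pi_k$, which therefore sends no edge of $G^P_k$ to a loop — equivalently, $\pi_k$ is a proper colouring of $G^P_k$ using $|V(G^{P'}_k)|$ colours.

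The main conceptual point, and the part that must be stated cleanly, is the loop-freeness invariant: the edges present in $G^P_k$ persist, never collapsed, as the search descends to any $P'$ in the subtree. This is immediate from the fact that (i) new edges may only be added along the way (rule D1.1), never removed, and (ii) a merge in rule D1.2 or D2 is performed only at a non-edge of the current graph. Everything else in the proof is formal. Once this invariant is in hand, the $\chi$ lower bound is a one-line application of the definition of chromatic number, and the proposition follows by summation.
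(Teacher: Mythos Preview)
Your proof is correct and follows essentially the same idea the paper relies on: the paper simply invokes the standard fact that the minimum number of vertices to which a graph can be contracted by repeatedly identifying non-adjacent vertices is its chromatic number, and declares the proposition ``immediate''. Your argument is precisely the unfolding of that fact---exhibiting the quotient map $\pi_k$ as a proper colouring by checking inductively along the D1/D2 merge sequence that no edge is ever contracted---so it is the same approach, just with the details made explicit. One minor wording issue: in D1 the graphs for colours $k\neq l_j$ are not literally ``unchanged'' (they may gain edges via D1.1), but your later summary with points (i) and (ii) handles this correctly and the argument goes through.
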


Determining the chromatic number of an arbitrary graph is an
\textsf{NP}-hard problem. Fortunately, we can restrict our
graphs to be of a special form: graphs that consist only of a clique
and some isolated vertices. For these graphs, the chromatic numbers
are given by the sizes of the cliques.

To see how to maintain graphs in this form, consider as a base case the initial partition $I$. Here, $E(G^I_k)=\tyhja$ for all $k\in P(c)$, so $G^I_k$ is of our special form---it has a clique of size 1. For a general partition $P$, we go through the branching rules D1-D2.

\begin{enumerate}[D1: ]
	\item \emph{$P$ is constructible:} Since we are allowed to choose an arbitrary ordering  $\{p_i,q_i\}$, $i=1,\ldots,N$, for the children $P[p_i,q_i]$, we design an ordering that preserves the special form of the graphs. For a graph $G$ of our special form, let $K(G)\subseteq V(G)$ consist of those vertices that are part of the clique in $G$. In the algorithm, we first set $H_k = G^P_k$ for all $k\in P(c)$ and repeat the following process until every graph $H_k$ is a complete clique.
	\begin{enumerate}[1. ]
		\item Pick some colour $k\in P(c)$ and an isolated vertex $v \in V(H_k)\smallsetminus K(H_k)$.
		\item Process the pairs $\{v,u\}$ for all $u\in K(H_k)$ in some order. By the end, update $H_k$ to include all the edges $\{v,u\}$ that were just processed (the size of the clique in $H_k$ increases by one).
	\end{enumerate}
	A moment's inspection reveals that when the graphs $G^P_k$ are of our special form, so are all of the derived graphs passed on to the children of $P$.
	\item \emph{$P$ is not constructible:} If the algorithm decides to continue the search from a partition $P'=P[p_1,p_2]$, for some $p_1,p_2\subseteq l\in P(c)$, we have $\{p_1,p_2\}\notin E(G^P_l)$. This means that either $p_1,p_2\in V(G^P_l)\smallsetminus K(G^P_l)$, in which case we are merging two isolated vertices, or one of $p_1$ or $p_2$ is part of the clique $K(G^P_l)$, in which case we merge an isolated vertex to the clique. In both cases, we maintain the special form in the graphs $\{G^{P'}_k\}_{k\in P(c)}$.
\end{enumerate}

\subsection{Traversing the search tree}

When running a B\&B algorithm we maintain a ``current best solution''
discovered so far as a global variable. This solution gives an upper
bound for the minimal value of the tile set size and can be used to
prune such search branches that are guaranteed (by the bounding
function) to only yield solutions worse than the current best.
There are two general strategies to traverse a B\&B search tree:
\emph{Depth-First Search} and \emph{Best-First Search}~\citep{Clausen1999}.
Our description of the search tree for
the lattice $X$ is general enough to allow either of these strategies
to be used in an actual implementation of the algorithm.

In the next section we give performance data on our DFS implementation of
the B\&B algorithm.

\section{Results}
\label{sec:results}

\begin{figure}[t]
  \centering
  \includegraphics[width=\textwidth]{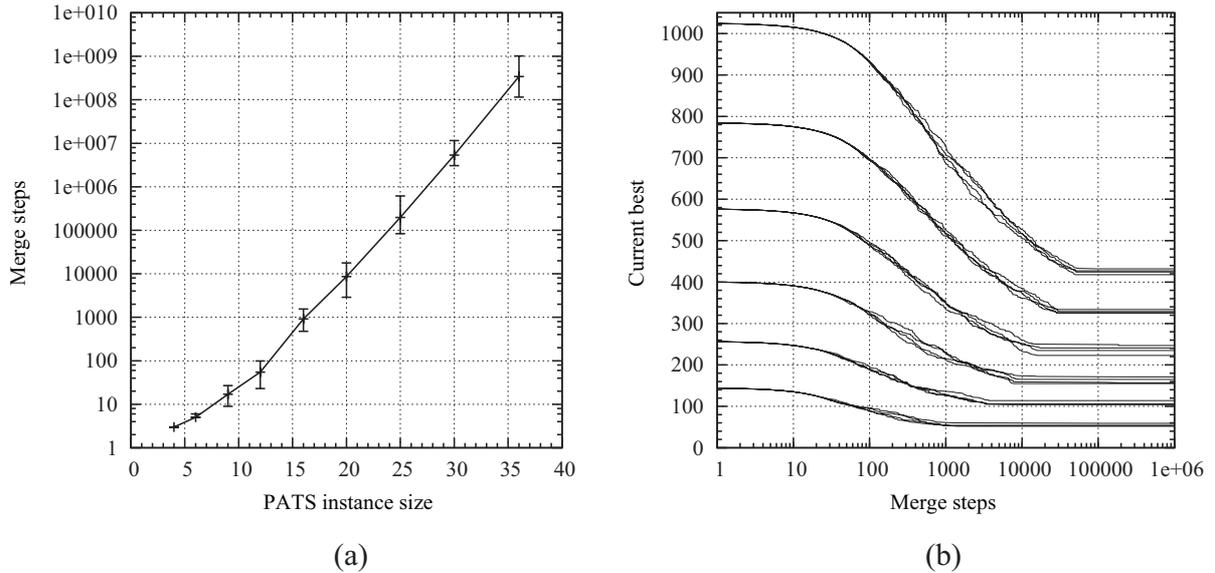}
  \caption{(a) Running time of the algorithm (as measured by the number of merge operations) to solve random 2-coloured near-square-shaped instances of the PATS problem. (b) Evolution of the tile set size of the ``current best solution'' for several large random 2-coloured instances of the PATS problem.}
  \label{fig:runtime}
\end{figure}
\begin{figure}[t]
  \centering
  \includegraphics[width=\textwidth]{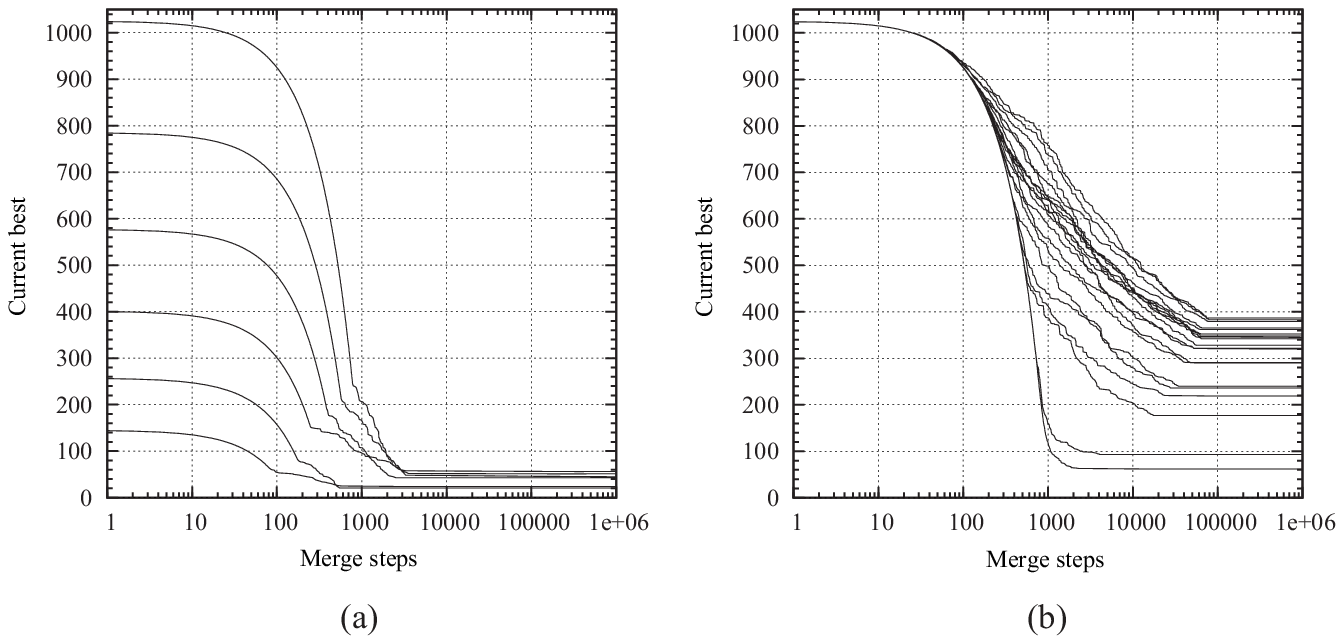}
  \caption{Evolution of the ``current best solution'' for (a) the Sierpinski pattern and for (b) the binary counter pattern. Randomization in the DFS has a clear effect on the performance of the algorithm in the case of the binary counter pattern.}
  \label{fig:curbest}
\end{figure}

The running time of our B\&B algorithm is proportional---up to a
polynomial factor---to the number of partitions the algorithm
visits. Hence, we measure the running time in terms of
the number of merge operations performed in the search.
Figure \ref{fig:runtime}a presents the running time of the algorithm
to find a minimal solution for random 2-coloured instances of the PATS
problem. The algorithm was executed for instance sizes $2\times2,
2\times 3, 3\times 3, \cdots, 5\times6$ and $6\times6$; the 20th and
80th percentiles are shown alongside the median of 21 separate runs
for each instance size. For the limiting case $6\times6$, the
algorithm spent on the order of two hours of (median) computing time
on a 2,61 GHz AMD processor.

Even though B\&B search is an exact method, it can be used to
find approximate solutions by running it for a suitable length of
time. Figure \ref{fig:runtime}b illustrates how the best solution
found up to a point develops as increasingly many steps of the
algorithm are run. The figure provides data on random 2-coloured
instances of sizes from $12\times12$ up to $32\times32$. Because we
begin our search from the initial partition, the best solution at the
first step is precisely equal to the instance size. For each size,
several different patterns were used. The algorithm was cut off after
$10^6$ steps. By this time, an approximate reduction of 58\% in the
size of the tile set was achieved (cf. a reduction of 43.5\% in
\citep{Ma2008}).

Next, we consider two well known examples of structured patterns: the
discrete Sierpinski triangle (part of which was shown in Figure
\ref{fig:sierpinski}) and the binary counter (see Figure 1 in
\citep{Rothemund2000}). A tile set of size 4 is optimal for both of
these patterns. First, for the Sierpinski pattern, we get a tile
reduction of well over 90\% (cf. 45\% in \citep{Ma2008}) in Figure
\ref{fig:curbest}a. We used the same cutoff threshold and instance
sizes as in Figure \ref{fig:runtime}b. Our description of the B\&B
algorithm leaves some room for randomization in deciding which search
branch a DFS is to explore next. This randomization does not seem to
affect the search dramatically when considering the Sierpinski
pattern---the separate single runs in Figure \ref{fig:curbest}a are
representative of an average randomized run. By contrast, for the
binary counter pattern, randomized runs for single instance size do
make a difference. Figure \ref{fig:curbest}b depicts several seperate
runs for instance size $32\times32$. Here, each run brings about a
reduction in solution size that oscillates between a reduction
achieved on a random 2-coloured instance (\ref{fig:runtime}b) and a
reduction achieved on the Sierpinski instance
(\ref{fig:curbest}a). This suggests that, as is characteristic of DFS
traversal, restarting the algorithm with a different random seed may
help with large instances that have small optimal solutions.

\section{Conclusion}
\label{sec:conclusion}

We have presented an exact branch-and-bound algorithm for finding
minimum-size tile sets that self-assemble a given $k$-coloured pattern
in a uniform self-assembly setting. Simulation results
indicate that our algorithm is able to find provably minimal tile sets for random instances of
sizes up to $6\times 6$ and can give approximate solutions for larger
instances as well.

One research direction to pursue would be to study tile sets that self-assemble an infinite,
but finite-period pattern. Does this generalization reduce easily to
the finite case? Do there exist minimal tile sets that tile the plane
aperiodically while still producing a periodic colour pattern?

\renewcommand\bibsection{\section*{References}}
\newpage
\small
\bibliographystyle{abbrv}
\bibliography{lahteet}

\end{document}